\newcommand{\bp}{\begin{proof} \small }
\newcommand{\ep}{\end{proof} \normalsize}
\newcommand{\epx}{\end{proof} \small}
\newcommand{\bpa}{\begin{proofappx} \footnotesize }
\newcommand{\epa}{\end{proofappx} \small }
\newtheorem{theorem}{Theorem}
\newtheorem{lemma}{Lemma}
\newtheorem*{theorem*}{Theorem}
\newtheorem*{proposition*}{Proposition}
\newtheorem*{corollary*}{Corollary}
\newtheorem*{lemma*}{Lemma}
\newtheorem*{assumption*}{Assumption}
\newtheorem*{definition*}{Definition}
\newtheorem*{claim*}{Claim}
\newcommand{\be}{\begin{equation}}
\newcommand{\ee}{\end{equation}}
\newcommand{\bs}{\begin{subequations}}
\newcommand{\es}{\end{subequations}}
\newcommand{\bq}{\begin{eqnarray}}
\newcommand{\eq}{\end{eqnarray}}
\newcommand{\bqn}{\begin{eqnarray*}}
\newcommand{\eqn}{\end{eqnarray*}}
\newcommand{\ba}{\left[ \begin{array}}
\newcommand{\ea}{\\ \end{array} \right]}
\newcommand{\ben}{\begin{enumerate}}
\newcommand{\een}{\end{enumerate}}
\def\real{{\mathchoice%
{\hbox{\rm\setbox1=\hbox{I}\copy1\kern-.45\wd1 R}}
{\hbox{\rm\setbox1=\hbox{I}\copy1\kern-.45\wd1 R}}
{\hbox{\scriptsize\rm\setbox1=\hbox{I}\copy1\kern-.45\wd1 R}}
{\hbox{\scriptsize\rm\setbox1=\hbox{I}\copy1\kern-.45\wd1 R}}}}
\def\Zint{{\mathchoice{\setbox1=\hbox{\sf Z}\copy1\kern-.75\wd1\box1}
{\setbox1=\hbox{\sf Z}\copy1\kern-.75\wd1\box1}
{\setbox1=\hbox{\scriptsize\sf Z}\copy1\kern-.75\wd1\box1}
{\setbox1=\hbox{\scriptsize\sf Z}\copy1\kern-.75\wd1\box1}}}
\newcommand{\complex}{ \hbox{\rm C\kern-0.45em\rule[.07em]{.02em}{.58em}%
\kern 0.43em}}
\begin{document}
	%
	\title{Learning-Based Task Offloading for Vehicular Cloud Computing Systems}


	\author{\IEEEauthorblockN{Yuxuan Sun$^*$, Xueying Guo$^\dagger$, Sheng Zhou$^*$, Zhiyuan Jiang$^*$, Xin Liu$^\dagger$, Zhisheng Niu$^*$}
		\IEEEauthorblockA{$^*$Tsinghua National Laboratory for Information Science and Technology\\
			Department of Electronic Engineering, Tsinghua University, Beijing, China\\
            Email: sunyx15@mails.tsinghua.edu.cn, \{sheng.zhou, zhiyuan, niuzhs\}@tsinghua.edu.cn\\
			$^\dagger$Department of Computer Science, University of California, Davis, CA, USA\\
			Email: guoxueying@outlook.com, xinliu@ucdavis.edu}}

	\maketitle

	\begin{abstract}
		Vehicular cloud computing (VCC) is proposed to effectively utilize and share the computing and storage resources on vehicles. However, due to the mobility of vehicles, the network topology, the wireless channel states and the available computing resources vary rapidly and are difficult to predict. 
		In this work, we develop a learning-based task offloading framework using the multi-armed bandit (MAB) theory, which enables vehicles to learn the potential task offloading performance of its neighboring vehicles with excessive computing resources, namely service vehicles (SeVs),  and minimizes the average offloading delay. 
		We propose an adaptive volatile upper confidence bound (AVUCB) algorithm and augment it with load-awareness and occurrence-awareness, by redesigning the utility function of the classic MAB algorithms. The proposed AVUCB algorithm can effectively adapt to the dynamic vehicular environment, balance the tradeoff between exploration and exploitation in the learning process, and converge fast to the optimal SeV with theoretical performance guarantee.  		
		Simulations under both synthetic scenario and a realistic highway scenario are carried out, showing that the proposed algorithm achieves close-to-optimal delay performance.
	\end{abstract}
	

	%
	\IEEEpeerreviewmaketitle
	\section{Introduction}
	Mobile devices are expected to support a vast variety of mobile applications. Many of them require a large amount of computation in a very short time, which cannot be satisfied by the devices themselves due to the limited processing power and battery capacity. 	
	Mobile cloud computing (MCC) is thus proposed \cite{dinh2013a}, enabling mobile devices to offload computation tasks to powerful remote cloud servers through the Internet. However, centralized deployments of clouds introduce long latency for data transmission, which cannot meet the requirements of the emerging delay-sensitive applications, such as augmented/virtual reality, connected vehicles, Internet of things, etc.
	By deploying computing and storage resources at the network edge, mobile edge computing (MEC) can provide low latency computing services \cite{hu2015mobile}. A major challenge in the MEC system is to perform \emph{task offloading}, i.e., when and where to offload the computation tasks, and how to allocate radio and computing resources, which have been widely investigated \cite{mao2017mobile, you2016energy, zhao2017task}. 
	Ad hoc cloudlet has been proposed in \cite{chen2015on} to make use of the computing resources of mobile devices through device-to-device communications \cite{wang2016pairing}.

	Vehicles have huge potential to enhance edge intelligence.
	The global number of connected vehicles is increasing rapidly, and will achieve to around 250 million by 2020 \cite{velosa2014predicts}. Meanwhile, vehicles are equipped with increasing amount of computing and storage resources \cite{abdel2015vehicle}. 
	In order to improve the utilization of vehicle resources, the concept of vehicular cloud computing (VCC) is proposed \cite{surveyvcc}, in which vehicles can serve as vehicular cloud (VC) servers by sharing their surplus computing resources, and users such as other vehicles and pedestrians can offload computation tasks to them. In this case, the vehicles providing services are called service vehicles (SeVs) and the vehicles that offload their tasks are called task vehicles (TaVs).
	Existing architectures include software-defined VC \cite{choosoftware} and VANET-Cloud \cite{bitam2015vanet}.

	Compared with the MEC system, the highly dynamic vehicular environments bring more uncertainties to the VCC system. First, the topology of vehiclar networks and the wireless channel states vary rapidly over time due to the mobility of vehicles. Second, the computing resources of SeVs are heterogeneous and fluctuate over time. These factors are typically difficult to predict, but significantly affect the delay performance of computation tasks. Furthermore, each TaV may be surrounded by multiple candidate SeVs since the density of SeVs can be much higher than that of MEC servers. It is not easy to estimate the performance of different SeVs for task offloading.
	
	There are existing papers investigating task offloading algorithms in the VCC system. In \cite{zheng2015smdp}, the VC and remote cloud layer are jointly considered. A centralized task offloading algorithm is proposed to minimize the average system cost related to delay, energy consumption and resource occupation. However, the centralized control requires large signaling overheads for vehicular states update, and the proposed algorithm has high complexity. A distributed task offloading algorithm is proposed in \cite{fengave} based on ant colony optimization, which is of much lower complexity. However, it still requires exchanges of vehicular states.
	To further overcome the uncertainties in the VCC system and improve service reliability, replicated task offloading is proposed in \cite{jiang2017}, in which task replicas are assigned to multiple SeVs at the same time.
	
	In this work, we focus on the task offloading problem of TaVs in the VCC system, and propose a learning-based task offloading algorithm to minimize the average offloading delay.
	Our main contributions are summarized as follows:
	
	1) We propose a learning-based distributed task offloading framework based on the multi-armed bandit (MAB) theory \cite{auer2002finite}, which enables TaVs to learn the performance of SeVs and to make task offloading decisions individually, in order to obtain low delay without exchanging vehicular states.
	
	2) We propose an adaptive volatile upper confidence bound (AVUCB) algorithm by redesigning the utility function in the classic MAB algorithms, making it adapt to the time-varying load and action space. Both load-awareness and occurrence-awareness are augmented to the learning algorithm, so that AVUCB can effectively cope with the highly dynamic vehicular environment and balance the so-called exploration and exploitation tradeoff in the learning process. We also prove that the performance loss can be upper bounded.
	
	3) Simulations are carried out under both synthetic scenario and a realistic highway scenario, showing that the proposed algorithm can achieve close-to-optimal delay performance.
	
	The rest of this paper is organized as follows. The system model and problem formulation is described in Section \ref{sys}. The AVUCB algorithm is then proposed in Section \ref{algo} and the performance is analyzed in Section \ref{per}. Simulation results are provided in Section \ref{sim}, and finally comes the conclusion in Section \ref{con}.

	\section{System Model and Problem Formulation} \label{sys}
	
	\subsection{System Overview}
	
	\begin{figure}  [!htb]
		\centering
		\includegraphics[width=0.48\textwidth]{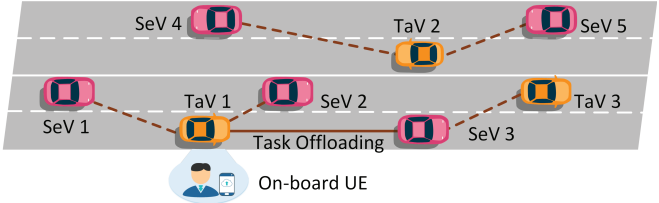}\\
		\caption{An illustration of task offloading in the VCC system. }\label{system}
	\end{figure}
	
    We consider a discrete-time VCC system in which moving vehicles are classified into two categories: SeVs and TaVs.
	SeVs are employed as vehicular cloud servers to provide computing services, while the on-board user equipments (UEs) of TaVs, such as smart phones and laptops, generate computation tasks that need to be offloaded for cloud execution. 
	Note that the role of SeV or TaV is not fixed for each vehicle, which depends on whether the computing resources are sufficient and shareable. 
	For each TaV, the surrounding SeVs in the same moving direction within its communication range $C_r$ are considered as candidate servers. 
	Here the moving direction, together with vehicle ID, speed and location of each candidate SeV can be known to each TaV, provided by vehicular communication protocols such as beacons of dedicated short-range communication (DSRC) standards \cite{kenney2011dsrc}.
	
	Each computation task is offloaded to one of the candidate SeVs and processed on this single SeV according to the task offloading algorithm, without further offloaded to other SeVs, MEC servers or the remote cloud.
	An exemplary VCC system is shown in  Fig. \ref{system}, where TaV 1 discovers 3 candidate SeVs (SeV 1-3) in its neighborhood, and the current task is offloaded to and executed on SeV 3.
	
	In this work, task offloading decisions are made in a distributed manner, i.e., each TaV makes its own task offloading decisions, in order to avoid large signaling exchange overhead. We focus on a representative TaV that moves on the road for a total of $T$ time periods. In time period $t$, the TaV generates a computation task, selects an SeV $n \in \mathcal{N}(t)$, offloads the task and then receives the computing result. Here, $\mathcal{N}(t)$ is denoted as the candidate SeV set that can provide computing services to the TaV in time period $t$.
	We assume that $\mathcal{N}(t) \neq \emptyset$ for $\forall t$, otherwise the TaV can offload tasks to the MEC server or the remote cloud.
	Note that $\mathcal{N}(t)$ changes across time since vehicles are moving.
	
	\subsection{Computation Task Offloading}
	
	The computation task generated in time period $t$ is described by three parameters: input data size $x_t$ (in bits) that needs to be transmitted from TaV to SeV $n$, output data size $y_t$ (in bits) that is fed back from SeV $n$ to TaV, and the computation intensity $w_t$ (in CPU cycles per bit) which is the required CPU cycles to compute one bit input data. Then the total required CPU cycles of the task in time period $t$ is $x_t w_t$ \cite{mao2017mobile}.
	
	For each candidate SeV $n$, the maximum computation capability is denoted by $F_n$ (in CPU cycles per second), which is the maximum available CPU speed of its on-board server. Multiple computation tasks can be processed simultaneously using processor sharing, and the allocated computation capability to the considered TaV in time period $t$ is denoted by $f_{t,n}$. Then the computation delay of SeV $n\in \mathcal{N}(t)$ is
	\begin{align}
	d_c(t,n) = \frac{x_t w_t}{f_{t,n}}.
	\end{align}
	However, in the real system,  $f_{t,n}$ may be unknown to the TaV in advance (this will be discussed in details in Section \ref{pro}).
	
	At time period $t$, the uplink transmission rate between the TaV and each candidate SeV $n\in \mathcal{N}(t)$ is denoted by $r^{(u)}_{t,n}$, which mainly depends on the uplink channel state $h^{(u)}_{t,n}$ between the TaV and SeV $n$, and the interference power $I^{(u)}_{t,n}$ at SeV $n$. Given the channel bandwidth $W$, the transmission power $P$ of the TaV and the noise power $\sigma^2$, the uplink transmission rate can be written as
	\begin{align}
	r^{(u)}_{t,n} = W\log_2\left(1 + \frac{Ph^{(u)}_{t,n}}{\sigma^2 + I^{(u)}_{t,n}}\right).
	\end{align}
	Similarly, the downlink transmission rate is given by
	\begin{align}
	r^{(d)}_{t,n} = W\log_2\left(1 + \frac{Ph^{(d)}_{t,n}}{\sigma^2 + I^{(d)}_{t}}\right),
	\end{align}
	where $h^{(d)}_{t,n}$ is the downlink channel state between SeV $n$ and the TaV, and  $I^{(d)}_{t}$ is the interference at the TaV.

	Therefore, the total transmission delay for uploading the task to SeV $n$ and receiving the result feedback is
	\begin{align}  
	d_t(t,n) = \frac{x_t }{r^{(u)}_{t,n}}+\frac{y_t }{r^{(d)}_{t,n}}.
	\end{align}
	Still, both $r^{(u)}_{t,n}$ and $r^{(d)}_{t,n}$ are unknown to the TaV in advance.
	
	Finally, the sum offloading delay to SeV $n$ in time period $t$ is the computation delay plus the transmission delay
	\begin{align}  
	d(t,n) = d_c(t,n) +d_t(t,n) .
	\end{align}	 
	
	\subsection{Problem Formulation} \label{pro}
	The TaV makes task offloading decisions about which SeV should serve each computation task, in order to minimize the average offloading delay. The problem is formulated as
    \begin{align} \label{obj}
	\textbf{P1:}~\min_{a_1,...,a_T} \frac{1}{T}\sum_{t=1}^{T}d(t,a_t),  
	\end{align}	 
	where $a_t\in\mathcal{N}(t)$ is the index of the selected SeV for task offloading in time period $t$.

	If the TaV knows the exact computation capability $f_{t,n}$, uplink and downlink transmission rates $r^{(u)}_{t,n}$, $r^{(d)}_{t,n}$ of all candidate SeVs before offloading the task in time period $t$, it only needs to calculate the sum delay $d(t,n)$ for $ \forall n \in \mathcal{N}(t)$, and $a_t=\arg \min_n d(t,n)$.
	However, in real systems, the wireless channel state and the interference change rapidly due to the movements of vehicles, and the computing resource of each SeV is shared by multiple tasks. Thus the transmission rates $r^{(u)}_{t,n}$, $r^{(d)}_{t,n}$ and the computation capability $f_{t,n}$ are fast varying across time, which are not easy to predict. On the other hand, if each TaV requests $r^{(u)}_{t,n}$, $r^{(d)}_{t,n}$ and $f_{t,n}$ of all candidate SeVs in each time period, the signaling overhead will be very high. 
	
	Without the pre-knowledge of the transmission rates $r^{(u)}_{t,n}$, $r^{(d)}_{t,n}$ and computation capability $f_{t,n}$ of each candidate SeV $n$, the TaV does not know which SeV performs the best when making the current offloading decision.
	Therefore, we will design learning-based task offloading algorithm in the following section, in which the TaV learns the delay performance of candidate SeVs based only on the historical delay observations. That is, the offloading decision $a_{t}$ at time $t$ is based on the observed delay sequence $d(1,a_1), d(2,a_2), ...,d(t-1,a_{t-1})$, but not the exact value of $r^{(u)}_{t,n}$, $r^{(d)}_{t,n}$ and $f_{t,n}$ of any candidate SeV $n$ in the current time period $t$.

	\section{Learning-Based Task Offloading Algorithm}\label{algo}
	In this section, we develop learning-based task offloading algorithm which enables the TaV to learn the delay performance of candidate SeVs and minimizes the average offloading delay.
	
	We assume that tasks are of diverse input data size, but the ratio of output data size and input data size, as well as the computation intensity are identical. In fact, this is a valid assumption when the offloaded tasks are of the same kind. Let $y_t/x_t=\alpha_0$ and $w_t=\omega_0$ for $\forall t$. Define the bit offloading delay as
	\begin{align}
	u(t,n)=\frac{1 }{r^{(u)}_{t,n}}+\frac{\alpha_0}{r^{(d)}_{t,n}}+ \frac{\omega_0}{f_{t,n}},
	\end{align}	
	 which is the sum delay of offloading one bit input data to SeV $n$ in time period $t$, reflecting the comprehensive service capability of each candidate SeV. Therefore, the sum delay 
	\begin{align}  
	d(t,n) = x_tu(t,n).
	\end{align}
	When making the offloading decision of the current task at time $t$, TaV knows the input data size $x_t$. But for $\forall n\in \mathcal{N}(t)$, the exact value of $u(t,n)$ and its distribution are not known to the TaV in prior, which need to learn.
	
	
	Our task offloading problem can be formulated as an MAB problem to solve.
	To be specific, the TaV is the player and each candidate SeV corresponds to an action with unknown distribution of loss. 
	The player makes sequential decisions on which action should be taken to minimize the average loss.
	The main challenge of the classic MAB problem is to balance the exploration and exploitation tradeoff: explore different actions to learn good estimates of each distribution, while at the same time select the empirically best actions as many as possible. The problem has been widely studied and many algorithms have been proposed with strong performance guarantee, such as the upper confidence bound (UCB) based UCB1 and UCB2 algorithms \cite{auer2002finite}.
	The MAB framework has already been applied in the wireless networks to help learn the unknown environments, such as solving channel access problems \cite{chen2011opp} and mobility management issues \cite{sun2017emm}.
	
	Although our problem is similar to the classic MAB problem, we still face two new challenges. 
	First, the candidate SeV set $\mathcal{N}(t)$ changes across time due to the relative movements of vehicles, rather than the fixed number of actions in the classic MAB problem. The SeVs may appear and disappear in the communication range of the TaV unexpectedly, causing a volatile action space. Existing solutions cannot exploit the empirical information of the remaining SeVs efficiently.
	Second, the performance loss in each time period is of equal weight in the MAB problem. However, in our model, the input data size $x_t$ of each task brings a weighting factor on the offloading delay. Intuitively, the task offloading algorithm should explore more when $x_t$ is low, and exploit more when $x_t$ is high, so that the cost of exploration can be reduced. 
	
	To overcome the aforementioned two challenges, we propose an Adaptive Volatile UCB (AVUCB) algorithm for task offloading, as shown in Algorithm 1. Parameter $\beta$ is a constant factor, $k_{t,n}$ is the number of tasks that have been offloaded to SeV $n$ up till time $t$, and $t_n$ records the occurrence time of each SeV $n$. Parameter $\tilde{x}_t$ is the normalized input data size within $[0,1]$, which is denoted as
	\begin{align} \label{normailized_x}
		\tilde{x}_t=\max \left\{ \min \left( \frac{x_t-x^-}{x^+-x^-},1\right), 0\right\},
	\end{align}
	where $x^+$ and $x^-$ are the upper and lower thresholds for normalizing $x_t$.

	\begin{algorithm}
		\caption{AVUCB Algorithm for Task Offloading}
		\begin{algorithmic}[1]
			\State \textbf{Input}: $\alpha_0$, $\omega_0$ and $\beta$.
			\For {$t=1,...,T$}
			\If { Any SeV $n \in \mathcal{N}(t)$ has not connected to TaV}
			\State Connect to SeV $n$ once.
			\State Update $\bar u_{t,n}=d(t,n)/x_t$, $k_{t,n}=1$, $t_n=t$.
			\Else
			\State Observe $x_t$.
			\State Calculate the utility function of each candidate SeV $n \in \mathcal{N}(t)$:
			\begin{align} \label{ada_utility}
				\hat{u}_{t,n}=\bar{u}_{t-1,n}-\sqrt{\frac{\beta(1-\tilde{x}_t)\ln (t-t_n)}{k_{t-1,n}}}.
			\end{align}		
			\State Offload the task to SeV:
			\begin{align} \label{ada_obj}
				a_t=\arg\min_{n \in \mathcal{N}(t)} \hat{u}_{t,n}.
			\end{align} 
			\State Observe delay $d(t,a_t)$.
			\State Update $\bar{u}_{t,a_t}\leftarrow \frac{ \bar{u}_{t-1,a_t}k_{t-1,a_t}+d(t,a_t)/x_t}{k_{t-1,a_t}+1} $.
			\State Update $k_{t,a_t}\leftarrow k_{t-1,a_t}+1$.
			\EndIf
			\EndFor		
		\end{algorithmic}
	\end{algorithm}
	
	Our proposed AVUCB algorithm can effectively balance the exploration and exploitation under the variation of candidate SeV set and input data size, inspired by the volatile MAB \cite{bnaya2013social} and opportunistic MAB \cite{wu2017adaptive} frameworks. In Algorithm 1, Lines 3-5 are the initialization phase, in which the TaV will connect to the newly appeared SeV once. 
	Lines 7-12 represent the continuous learning phase. The utility function defined in \eqref{ada_utility} is the sum of empirical delay performance $\bar u_{t,n}$ and a padding function (the latter term). 
	Compared with existing UCB algorithms, the padding function is redesigned by jointly taking into account the occurrence time $t_n$ of SeV and the input data size $x_t$ (the load), thus it can dynamically adjusts the weight of exploration and bring the \emph{occurrence-awareness} and \emph{load-awareness} to the algorithm.
	Task offloading decision is then made in Line 9, which is a minimum seeking problem with computational complexity $O(|\mathcal{N}(t)|)$, where $|\mathcal{N}(t)|$ is the number of candidate SeVs in time period $t$.
	
	
%
%

	\section{Performance Analysis}\label{per}
	In this section, we present the performance analysis of the proposed AVUCB algorithm.
	We first define an epoch as the duration in which the candidate SeV set remains the same. Let $B$ denote the total number of epochs during $T$ time periods, $\mathcal{N}_b$ the candidate SeV set in the $b$th epoch, and $t_b$, $t'_b$ the beginning and ending time period of the $b$th epoch with $t'_B=T$.
	We assume that the bit offloading delay $u(t,n)$ of each candidate SeV is i.i.d. over time and independent of each other, with expectation $\mathbb{E}_t[u(t,n)]=\mu_n$.
	We will prove later through simulations that without this assumption, AVUCB still works well.
	In each epoch, let $\mu_b^*=\min_{n\in\mathcal{N}_b}\mu_n$ and $a_b^*=\arg\min_{n\in\mathcal{N}_b}\mu_n$.

	Define the total \emph{learning regret} as
	\begin{align}
	R_T = \sum_{b=1}^{B} \mathbb{E}\left[\sum_{t=t_b}^{t'_b}x_t\left(u(t,n)-\mu_b^*\right)\right],	
	\end{align}
	which is the expected loss of delay performance due to lacking the service capability information of the candidate SeVs. In the following subsections, we try to upper bound the learning regret of AVUCB algorithm.
	
	
	\subsection{Regret Analysis under Identical Load}
	Compared to the existing UCB algorithms \cite{auer2002finite}, the AVUCB algorithm adds the occurrence time $t_n$ and normalized input data size $\tilde{x}_t$ to the padding function. In this subsection, we first investigate the impact of the occurrence time, by assuming that the load is not time varying, i.e., tasks are of identical input data size with $x_t=x_0$ and $\tilde{x}_t=0$ for $\forall t$.
	Thus the padding function in \eqref{ada_utility} can be simplified as $\sqrt{\frac{\beta\ln (t-t_n)}{k_{t-1,n}}}$, and the learning regret 
	$R_T =x_0 \sum_{b=1}^{B} \mathbb{E}\left[\sum_{t=t_b}^{t'_b}(u(t,n)-\mu_b^*)\right]$.	
	
	Let $u_m=\sup_{t,n} u(t,n) $, and $\delta_{n,b}=(\mu_n-\mu_b^*)/u_m$. We first provide the upper bound of the learning regret within each epoch, as shown in Lemma \ref{vucb_epoch}.
	
	\begin{lemma} \label{vucb_epoch}
		Let $\beta=2u^2_m$, in the $b$th epoch, the learning regret of AVUCB with identical load has an upper bound as follows:
		\begin{align}
			R_{b}\leq x_0u_m \left[\sum_{n\neq a_b^*}\frac{8\ln (t'_b-t_n)}{\delta_{n,b}} + \left(1+ \frac{\pi^2}{3}\right)\sum_{n\neq a_b^*}\delta_{n,b} \right].
		\end{align}
	\end{lemma}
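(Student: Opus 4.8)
The plan is to follow the classical UCB1 regret analysis of Auer, Cesa-Bianchi and Fischer, adapted to the single-epoch setting where the clock for SeV $n$ starts at its occurrence time $t_n$. Fix the epoch $b$ and work with time indices $t \in \{t_b, \dots, t'_b\}$. Since the load is identical, the utility function reduces to $\hat u_{t,n} = \bar u_{t-1,n} - \sqrt{\beta \ln(t - t_n)/k_{t-1,n}}$, so minimizing $\hat u_{t,n}$ is exactly a (sign-flipped) UCB rule with confidence radius driven by $\ln(t-t_n)$ rather than $\ln t$. The regret decomposes as $R_b = x_0 \sum_{n \neq a_b^*}(\mu_n - \mu_b^*)\,\mathbb{E}[T_{n,b}]$, where $T_{n,b}$ is the number of times the suboptimal SeV $n$ is pulled during the epoch; writing $\mu_n - \mu_b^* = u_m \delta_{n,b}$ reduces everything to bounding $\mathbb{E}[T_{n,b}]$ by $8\ln(t'_b - t_n)/\delta_{n,b}^2 + 1 + \pi^2/3$.

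The core step is the standard counting argument. For a suboptimal arm $n$, after it has been pulled at least $\ell_n \define \lceil 8 \ln(t'_b - t_n)/\delta_{n,b}^2 \rceil$ times, I would show that a further pull forces one of two low-probability confidence-bound failures: either the empirical mean of $n$ is below its true mean minus the confidence radius, or the empirical mean of $a_b^*$ exceeds its true mean plus its radius. With $\beta = 2u_m^2$ the radius is $\sqrt{2u_m^2 \ln(t-t_n)/k}$, and a Chernoff–Hoeffding bound on the $u(\cdot,n)/u_m \in [0,1]$-scaled samples gives each failure probability at most $(t-t_n)^{-4}$ (the factor $4 = 2\beta/u_m^2$ from the exponent). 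The choice $\ell_n = 8\ln(t'_b - t_n)/\delta_{n,b}^2$ is exactly what makes the "both confidence bounds hold" case contradict the selection rule $\hat u_{t,n} \le \hat u_{t,a_b^*}$, because then $\mu_n - \mu_b^* \le 2\sqrt{2u_m^2\ln(t-t_n)/k_{t-1,n}} < \delta_{n,b} u_m$ when $k_{t-1,n} \ge \ell_n$. Summing the failure probabilities over $t$ gives a convergent series bounded by $\sum_{s\ge1} 2 s^{-4} \cdot (\text{something}) \le \pi^2/3$ after the standard double-counting over $(k,s)$ pairs; adding the initial $\ell_n$ pulls yields $\mathbb{E}[T_{n,b}] \le \ell_n + 1 + \pi^2/3 \le 8\ln(t'_b-t_n)/\delta_{n,b}^2 + 1 + \pi^2/3$. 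Plugging back, $R_b \le x_0 u_m \sum_{n\neq a_b^*}\delta_{n,b}\bigl(8\ln(t'_b-t_n)/\delta_{n,b}^2 + 1 + \pi^2/3\bigr)$, which is the claimed bound after distributing.

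The main obstacle — and the only place the argument genuinely departs from textbook UCB1 — is handling the shifted and possibly small argument $t - t_n$ of the logarithm. When $t - t_n$ is small (the SeV just appeared), $\ln(t-t_n)$ can be zero or negative, so the padding term is ill-defined or non-exploratory; I would need to argue that this is harmless, either because the initialization phase (Lines 3–5) guarantees $k_{t-1,n} \ge 1$ and one extra "forced" pull per newly appeared SeV is already absorbed into the additive constant, or by noting that for the tail sum one only needs $t - t_n \ge 2$ so $\ln(t-t_n) > 0$, and the finitely many earlier steps contribute $O(1)$. I also need the Hoeffding bound to apply uniformly over the random stopping structure; the usual fix is to union-bound over all possible values of the per-arm counts $k$, exactly as in the UCB1 proof, which is why the $\pi^2/3$ (i.e. $\sum_k k^{-2}$, or here a faster-decaying series) appears. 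Once the shifted-logarithm bookkeeping is pinned down, the rest is the routine Chernoff–Hoeffding plus geometric-series estimate.
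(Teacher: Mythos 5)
Your proposal is correct and follows essentially the same route as the paper: the paper decomposes $R_b = x_0 u_m \sum_{n\neq a_b^*}\delta_{n,b}\,\mathbb{E}[k_{n,b}]$ and then simply cites Lemma 1 of the volatile-MAB paper and Theorem 1 of Auer et al. for the bound $\mathbb{E}[k_{n,b}]\leq 8\ln(t'_b-t_n)/\delta_{n,b}^2 + 1 + \pi^2/3$, which is exactly the Chernoff--Hoeffding counting argument you spell out (including the shifted clock $t-t_n$ and the normalization by $u_m$). Your additional care about the case where $\ln(t-t_n)$ is small or nonpositive addresses a point the paper leaves implicit by deferring to the cited references.
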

	\begin{proof}
		See Appendix \ref{a1}.
	\end{proof}
	
	Then we can upper bound the learning regret over $T$ time periods in the following theorem.
			\begin{theorem} \label{vucb_T}
				Let $\beta=2u^2_m$, the total learning regret $R_T$ of AVUCB with identical load has an upper bound as follows:
				\begin{align}
				R_{T}\leq x_0u_m\sum_{b=1}^{B} \left[\sum_{n\neq a_b^*}\frac{8\ln T}{\delta_{n,b}} + O(1) \right].
				\end{align}
			\end{theorem}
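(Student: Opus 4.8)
The plan is to obtain Theorem~\ref{vucb_T} directly from Lemma~\ref{vucb_epoch} by summing the per‑epoch regret bounds over all $B$ epochs. By the definition of the learning regret and linearity of expectation, $R_T=\sum_{b=1}^{B}R_b$, so applying Lemma~\ref{vucb_epoch} to each summand gives
\begin{align}
R_T\leq x_0u_m\sum_{b=1}^{B}\left[\sum_{n\neq a_b^*}\frac{8\ln(t'_b-t_n)}{\delta_{n,b}}+\left(1+\frac{\pi^2}{3}\right)\sum_{n\neq a_b^*}\delta_{n,b}\right].
\end{align}

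Next I would bound the two terms inside the bracket separately. For the logarithmic term, every SeV is first encountered at some time period $t_n\geq 1$ and the $b$th epoch ends no later than $T$ (that is, $t'_b\leq t'_B=T$), hence $1\leq t'_b-t_n\leq T$ and therefore $\ln(t'_b-t_n)\leq\ln T$; substituting this relaxation produces the leading term $\sum_{n\neq a_b^*}\frac{8\ln T}{\delta_{n,b}}$ claimed in the theorem. For the second term, observe that $\delta_{n,b}=(\mu_n-\mu_b^*)/u_m\in(0,1]$ since $0<\mu_b^*\leq\mu_n\leq u_m$ by the definition $u_m=\sup_{t,n}u(t,n)$, and that the number of candidate SeVs in any epoch is finite and independent of $T$, so $(1+\pi^2/3)\sum_{n\neq a_b^*}\delta_{n,b}\leq(1+\pi^2/3)(|\mathcal{N}_b|-1)=O(1)$. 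Combining the two estimates term by term yields exactly
\begin{align}
R_T\leq x_0u_m\sum_{b=1}^{B}\left[\sum_{n\neq a_b^*}\frac{8\ln T}{\delta_{n,b}}+O(1)\right],
\end{align}
which is the desired inequality.

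Because this is essentially an aggregation step on top of Lemma~\ref{vucb_epoch}, there is no genuine technical obstacle; the only point that deserves explicit care is ensuring the $O(1)$ is uniform in $T$. This relies on two modeling facts: the number of SeVs that can simultaneously be within communication range of the TaV is bounded by a constant (so $|\mathcal{N}_b|$ does not grow with $T$), and for $n\neq a_b^*$ the suboptimality gap satisfies $0<\delta_{n,b}\leq 1$ — the upper bound is what controls the second term, while the strict positivity, which is really needed only inside Lemma~\ref{vucb_epoch} and corresponds to assuming a unique optimal SeV per epoch, keeps $8\ln T/\delta_{n,b}$ finite. I would also note in passing that the replacement $\ln(t'_b-t_n)\le\ln T$ is loose for short epochs, so the theorem is a conservative worst‑case statement and the sharper, occurrence‑aware form from Lemma~\ref{vucb_epoch} remains available when a tighter bound is wanted.
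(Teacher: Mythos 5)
Your proposal is correct and follows essentially the same route as the paper: apply Lemma~\ref{vucb_epoch} epoch by epoch, relax $\ln(t'_b-t_n)$ to $\ln T$, absorb the $(1+\pi^2/3)\sum_{n\neq a_b^*}\delta_{n,b}$ term into $O(1)$, and sum over $b$. The extra care you take in justifying the uniformity of the $O(1)$ term is a reasonable elaboration the paper leaves implicit, but it does not change the argument.
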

			\begin{proof}
			See Appendix \ref{a3}.
			\end{proof}

	Theorem 1 implies that when tasks are of equal input data size, the proposed AVUCB algorithm can provide a bounded performance loss, compared to the optimal solution in which the TaV knows in prior which SeV performs the best. Specifically, the performance loss grows linearly with $B$ and logarithmically with $T$.
	
	\subsection{Impact of the Load}
	In this subsection, we show the impact of the load on the learning regret, by considering that the input data size $x_t$ is random and continuous. For simplicity, we focus on a single epoch and assume $B=1$. 
	Thus there exists single best SeV with $\mu^*=\min_{n\in\mathcal{N}_1}\mu_n$, $a^*= \arg\min_{n\in\mathcal{N}_1}\mu_n$, and the learning regret is simplified as $R_T =\mathbb{E}\left[\sum_{t=1}^{T}x_t(u(t,n)-\mu^*)\right]$. 
	
	Recall that the normalized input data size $\tilde{x}_t$ is defined in \eqref{normailized_x}, in which the upper and lower thresholds $x^+$ and $x^-$ should be carefully selected for the tradeoff between exploration and exploitation.  	
	In the following theorem, $x^+$ and $x^-$ are selected such that $\mathbb{P}\{x_t\leq x^-\}>0$ and $x^+\geq	x^-$. Particularly, when $x^+=	x^-$, let $\tilde{x}_t=0$ if $x_t\leq  x^-$ and $\tilde{x}_t=1$ if $x_t >  x^-$.
	
	The learning regret under random and continuous input data size is shown in Theorem 2.
	
		\begin{theorem} \label{ada}
			Let $\beta=2u^2_m$, with random continuous $x_t$ and $\mathbb{P}\{x_t\leq x^-\}>0$, we have:
			
		(1)	With $x^+\geq	x^-$, the expected number of tasks $k_{T,n}$ offloaded to any SeV $n\neq a^*$ can be upper bounded as 
		\begin{align}
		\mathbb{E}[k_{T,n}] \leq\frac{8\ln T }{\delta^2_{n}} +O(1).   
		\end{align}
		
		(2) With $x^+=	x^-$, the learning regret
			\begin{align}
				R_T \leq u_m\sum_{n\neq a^*}\left[ \frac{8\ln T \mathbb{E}[x_t|x_t\leq x^-]}{\delta_{n}} +O(1)    \right],
			\end{align}
			where $\mathbb{E}[x_t|x_t\leq x^-]$ is the expected $x_t$ under condition $x_t\leq x^-$, $u_m=\sup_{t,n} u(t,n) $, and $\delta_{n}=(\mu_n-\mu^*)/u_m$.
		\end{theorem}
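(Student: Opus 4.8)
The plan is to adapt the classical UCB1 regret analysis (as in \cite{auer2002finite}) to incorporate the load-dependent padding factor $1-\tilde x_t$, treating the two parts of the theorem separately but with a shared core argument. For part (1), I first fix a suboptimal SeV $n\neq a^*$ and bound $\mathbb{E}[k_{T,n}]$ by the standard decomposition: $k_{T,n}\le \ell + \sum_{t} \mathds{1}\{a_t=n,\ k_{t-1,n}\ge \ell\}$ for a threshold $\ell$ to be chosen. Whenever $a_t=n$ with $k_{t-1,n}\ge\ell$, the utility ordering $\hat u_{t,n}\le \hat u_{t,a^*}$ forces at least one of three bad events: (i) $\bar u_{t-1,a^*}$ overestimates $\mu^*$ by more than its padding term, (ii) $\bar u_{t-1,n}$ underestimates $\mu_n$ by more than its padding term, or (iii) the gap $\mu_n-\mu^*$ is smaller than twice the padding term of $n$. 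The key observation that makes the argument go through with a volatile load is that the padding term $\sqrt{\beta(1-\tilde x_t)\ln(t-t_n)/k_{t-1,n}}$ is \emph{at most} $\sqrt{\beta\ln(t-t_n)/k_{t-1,n}}\le\sqrt{\beta\ln T/k_{t-1,n}}$, since $1-\tilde x_t\le 1$; hence event (iii) cannot occur once $k_{t-1,n}\ge \ell := \lceil 8\ln T/\delta_n^2\rceil$ with $\beta=2u_m^2$, exactly as in UCB1. Events (i) and (ii) are then controlled by a Chernoff–Hoeffding bound over the i.i.d.\ samples of $u(t,n)$ (rescaled to $[0,1]$ via $u_m$), and the sum of their probabilities over all $t$ is $O(1)$ — a $\sum 1/t^2$-type series. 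This yields $\mathbb{E}[k_{T,n}]\le 8\ln T/\delta_n^2 + O(1)$.

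For part (2), with $x^+=x^-$, the normalized load is the indicator $\tilde x_t = \mathds{1}\{x_t > x^-\}$, so the padding term \emph{vanishes} on every time period with $x_t>x^-$ and equals the full UCB padding $\sqrt{\beta\ln(t-t_n)/k_{t-1,n}}$ only when $x_t\le x^-$. The plan is to split the regret contribution of a suboptimal SeV $n$ according to whether $x_t\le x^-$ or $x_t>x^-$. On periods with $x_t>x^-$ the utility reduces to the raw empirical mean $\bar u_{t-1,n}$, and — because $x_t$ is independent of the history and of which SeV is currently empirically best — such periods do not drive exploration: in expectation they contribute the same proportion of pulls as the underlying empirical-best indicator, so they cost $O(1)$ extra pulls of $n$ beyond what the low-load periods already force. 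More carefully, I would re-run the decomposition from part (1) but only count "bad" pulls on low-load periods ($x_t\le x^-$), showing the number of such pulls of $n$ is $\le 8\ln T/\delta_n^2 + O(1)$; then, conditioning on the value of $x_t$ and using independence, each low-load pull contributes expected weight $\mathbb{E}[x_t\mid x_t\le x^-]$ to the regret while each high-load pull, being "follow-the-empirical-leader," contributes only $O(1)$ in total. Summing over $n\neq a^*$ and multiplying by $u_m$ (since $u(t,n)-\mu^*\le u_m\delta_n$ after rescaling... more precisely $\le u_m$ and the gap structure gives the $\delta_n$ in the denominator cancelling appropriately) gives the stated bound $R_T\le u_m\sum_{n\neq a^*}[8\ln T\,\mathbb{E}[x_t\mid x_t\le x^-]/\delta_n + O(1)]$.

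The main obstacle I anticipate is making the high-load analysis rigorous: once the padding term is zero, $a_t=\arg\min_n \bar u_{t-1,n}$ is a pure exploitation step, and one must argue that these steps do \emph{not} accumulate more than $O(1)$ suboptimal pulls \emph{given that} the low-load periods have already supplied enough exploration (i.e., at least $\ell$ samples of each arm with high probability after $O(\ln T)$ total rounds). This requires a careful coupling between the high-load periods and the state of the empirical means produced by the low-load periods, together with the Chernoff–Hoeffding control of events (i)–(ii) to ensure the empirical leader is the true best arm with probability $1-O(1/t^2)$. The subtlety is that $\mathbb{P}\{x_t\le x^-\}>0$ guarantees low-load periods keep arriving, so the exploration budget accrues, but quantifying "enough" and absorbing the transient into the $O(1)$ term is where the bookkeeping is heaviest. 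The load-dependent weighting $x_t$ on the regret (as opposed to the unit weight in classical MAB) is handled cleanly by the independence of $x_t$ from the decision history, so that step is routine once the pull-count bounds are in place.
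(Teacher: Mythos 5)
Your proposal and the paper take genuinely different routes: the paper's proof is essentially a reduction. It rescales the utility by $u_m$, observes that $1-\bar{u}_{t-1,n}/u_m\in[0,1]$ turns the minimization into a maximization of an opportunistic-UCB index, and then invokes Lemma~7, Theorem~3 and Appendix~C.2 of \cite{wu2017adaptive} wholesale. You instead attempt a self-contained adaptation of the UCB1 analysis. That would be more informative if it worked, but as sketched it has a concrete gap.

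The gap is in part (1), in the step where you claim events (i) and (ii) are ``controlled by a Chernoff--Hoeffding bound \dots and the sum of their probabilities over all $t$ is $O(1)$ --- a $\sum 1/t^2$-type series.'' This is exactly where the load-dependent padding breaks the classical argument. The deviation threshold in events (i) and (ii) is the padding term itself, $\sqrt{\beta(1-\tilde x_t)\ln(t-t_n)/k_{t-1,n}}$, so Chernoff--Hoeffding gives a bad-event probability of order $(t-t_n)^{-c(1-\tilde x_t)}$ for a constant $c$; when $\tilde x_t$ is close to $1$ (high load) this is close to a constant, not $1/t^2$, and the series does not converge. (In the extreme $\tilde x_t=1$ the index is the raw empirical mean and event (i) occurs with probability about $1/2$.) Your observation that the padding is \emph{at most} the full UCB padding kills event (iii), but it works \emph{against} you for events (i)--(ii), since a smaller confidence radius makes the concentration bound weaker, not stronger. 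This is precisely the difficulty you correctly flag for part (2) --- that greedy high-load rounds must be charged against exploration accumulated in low-load rounds, using $\mathbb{P}\{x_t\le x^-\}>0$ --- but it afflicts part (1) equally (any round with $\tilde x_t>0$ has shrunken padding), and in both parts you describe the needed coupling argument rather than carrying it out. Resolving it is the actual content of the cited results in \cite{wu2017adaptive}; without it, neither bound in the theorem is established by your sketch.
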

		\begin{proof}
			See Appendix \ref{a2}.
		\end{proof}
		
		Theorem 2 shows that, under single epoch, the learning regret grows logarithmically with $T$. This implies that when the input data size $x_t$ varies across time, our proposed AVUCB algorithm can still effectively balance the exploration and exploitation by adjusting the normalized factor $\tilde{x}_t$, and provide a bounded deviation compared to the optimal solution.  
	
	\section{Simulations} \label{sim}
	In this section, we evaluate the average delay and learning regret of the proposed AVUCB algorithm through simulations. We start from a synthetic scenario, and then simulate a realistic highway scenario.
	
	\subsection{Simulation under Synthetic Scenario} 
	Simulations under synthetic scenario are carried out in MATLAB. We consider one TaV and 5 SeVs appear (and may also disappear) in the duration of $T=1200$ time periods. The communication range $C_r=200 \mathrm{m}$. Within the TaV's communication range, the distance between the TaV and each SeV ranges in $[10, 200]\mathrm{m}$, and changes randomly by $-10\mathrm{m}$ to $10\mathrm{m}$ every time period. According to \cite{abdulla2016vehicle}, the wireless channel state is modeled by an inverse power law $h^{(u)}_{t,n}=h^{(d)}_{t,n}=A_0l^{-2}$, where $l$ is the distance between TaV and SeV and $A_0=-17.8\mathrm{dB}$. Other default parameters are: computation intensity $\omega_0=1000 \mathrm{Cycles/bit}$, transmit power $P=0.1\mathrm{W}$, channel bandwidth $W=10\mathrm{MHz}$, noise power $\sigma^2=10^{-13}\mathrm{W}$ and parameter $\beta$ in \eqref{ada_utility} is $2$.
	
	We first evaluate the effect of occurrence time $t_n$ in the proposed AVUCB algorithm, by assuming that all the tasks are of equal input data size $x_0=0.6 \mathrm{Mbits}$, and thus $\tilde{x}_t=0$ for $\forall t$. The whole duration is divided into $3$ epochs, each having $400$ time periods. The index of candidate SeVs and their maximum computation capability $F_n$ is shown in table \ref{SeV_VUCB}. In epoch 2, there appears two SeVs indexed by $3$ and $4$, while in epoch 3, there appears a new SeV $5$ and SeV $3$ disappears. At each time period, the allocated computation capability $f_{t,n}$ is randomly distributed from $20\%F_n$ to $50\%F_n$.

	     \begin{table}[!htb]
	     	\vspace{-0.1in}
	     	\caption{Candidate SeVs and Maximum Computation Capability}
	     	\label{SeV_VUCB}
	     	\centering
	     	\begin{tabular}{||c||c|c|c|c|c||}
	     		\hline
	     		Index of SeV & 1 & 2 & 3&4&5\\
	     		\hline
	     		$F_n$ (GHz)& 3&4&6&5&2\\
	     		\hline
	     		Epoch 1 &${\surd}$ & ${\surd}$ & -- & --&-- \\
	     		\hline
	     		Epoch 2 &${\surd}$& ${\surd}$ &  ${\surd}$  &${\surd}$ &-- \\
	     		\hline
	     		Epoch 3 & ${\surd}$&${\surd}$& $\times$ & ${\surd}$  &${\surd}$ \\
	     		\hline
	     	\end{tabular}
	     	\vspace{-0.1in}
	     \end{table}

	 	Fig. \ref{Performance_VUCB1} shows the learning regret of the proposed AVUCB algorithm and existing UCB1 algorithm under diverse occurrence time of SeVs and identical load. In the first epoch, the two algorithms perform the same since the occurence time of all SeVs are 1.
	 	From the second epoch, by taking into account of the occurrence time, AVUCB is able to learn the performance of the newly appeared SeVs faster, while effectively making use of the information of the remaining SeVs. Compared to UCB1 algorithm, it can reduce about $50\%$ of the learning regret. 
	 	The average delay performance of each epoch is shown in Fig. \ref{Performance_VUCB2}, in which the average delay of AVUCB converges faster to optimal delay than UCB1.
	 
	\begin{figure}[!t]
		\centering	
		\subfigure[Learning regret]{\label{Performance_VUCB1}			
			\includegraphics[width=0.4\textwidth]{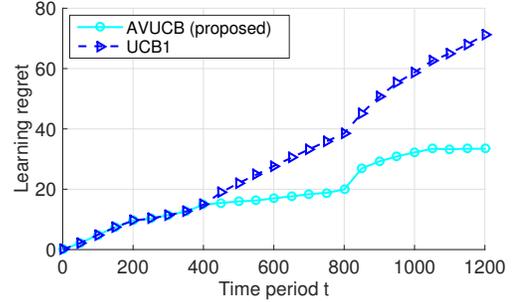}}		
		\subfigure[Average delay]{\label{Performance_VUCB2}	
			\includegraphics[width=0.4\textwidth]{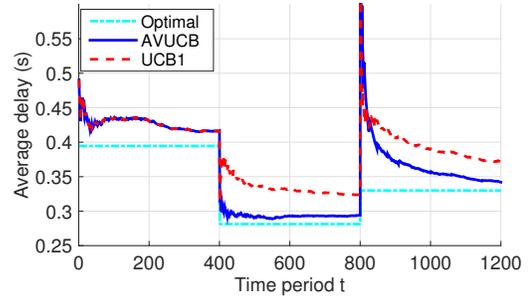}}
		\caption{Performance of AVUCB under diverse SeV occurrence time and identical load.}
		\label{Performance_VUCB}
		\vspace{-3mm}
	\end{figure}
	
	We then focus on a single epoch with $B=1$, and evaluate the effect of the normalized input data size $\tilde{x}_t$ under random load. The candidate SeV set and its maximum computation capability are the same as epoch $2$ in table \ref{SeV_VUCB}. The input data size $x_t$ is uniformly distributed within $[0.2, 1]\mathrm{Mbits}$. The upper and lower thresholds are set to be $x^+=0.8, x^-=0.4$ and $x^+=x^-=0.6$ respectively. As shown in Fig. \ref{Performance_Ada}, under the two sets of thresholds, AVUCB achieves similar learning regret, since both settings can effectively adjust the weight of exploration and exploitation. However, the UCB1 algorithm suffers much higher learning regret without load-awareness, since it may still explore even if the input data size is large.
	
		\begin{figure}[!t]
			\centering	
			\includegraphics[width=0.35\textwidth]{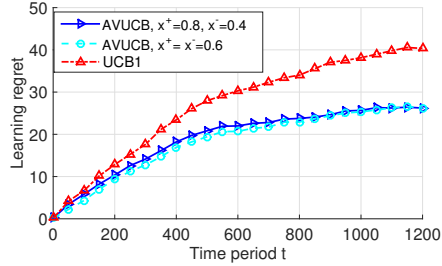}	
			\caption{Performance of AVUCB under random load.}
			\label{Performance_Ada}
			\vspace{-5mm}
		\end{figure}

	\subsection{Simulation under Realistic Highway Scenario} 
	
	In this subsection, we simulate a realistic highway scenario and better emulate the traffic flow using Simulation of Urban MObility (SUMO)\footnote{http://www.sumo.dlr.de/userdoc/SUMO.html}. We use a $12\mathrm{km}$ stretch of two-lane G6 Highway in Beijing with two ramps, obtained from Open Street Map (OSM)\footnote{http://www.openstreetmap.org/}.
	The network consists of $50\%$ SeVs and $50\%$ TaVs. Each vehicle is of equal probability of reaching a maximum speed of $72\mathrm{km/h}$ or $90\mathrm{km/h}$. In each time period, the probability of generating a vehicle from each ramp is $0.1$, and whenever a vehicle approaches a ramp, it leaves the highway with probability $0.5$.
	
	The locations of vehicles are simulated by SUMO, and then we can calculate the distance between each TaV and SeV at each time period. SeVs within its communication range $C_r=200\mathrm{m}$ are considered as candidate SeVs. We then focus on a single TaV, moving through the highway in $T=400$ time periods. The occurrence and departure time of each candidate SeV, and its maximum computation capability $F_n$ are listed in table \ref{SeV_SUMO}. Same as above, the allocated computation capability $f_{t,n}$ is randomly distributed from $20\%F_n$ to $50\%F_n$, and the input data size $x_t$ is uniformly distributed within $[0.2, 1]\mathrm{Mbits}$. Let $x^+=x^-=0.6$.

 \begin{table}[!htb]
		     	\vspace{-0.1in}
		     	\caption{Candidate SeVs and Maximum Computation Capability}
		     	\label{SeV_SUMO}
		     	\centering
		     	\begin{tabular}{||c||c|c|c|c|c||}
		     		\hline
		     		Index of SeV & 1 & 2 & 3&4&5\\
		     		\hline
		     		Occurrence time& 1&1&1&118&320\\
		     		\hline
		     		Departure time &400 &400 &400&400&343 \\
		     		\hline
		     		$F_n$ (GHz)&3& 2 &  2.5 &4.5 & 3.5 \\
		     		\hline				 
		     	\end{tabular}
 \end{table}
				     
	We evaluate the average delay performance of the proposed AVUCB algorithm compared with 4 other algorithms: 
	1) \textbf{UCB1} \cite{auer2002finite}, which considers neither occurrence time nor input data size. 2) \textbf{VUCB1} \cite{bnaya2013social}, which is occurrence-aware but not load-aware. 3) A naive \textbf{Random Policy} in which the TaV randomly select a SeV in each time period. 4) \textbf{Optimal Policy}, in which TaV knows the performance of each candidate SeV in advance and selects the optimal one.

\begin{figure}[!t]
	\centering	
	\includegraphics[width=0.4\textwidth]{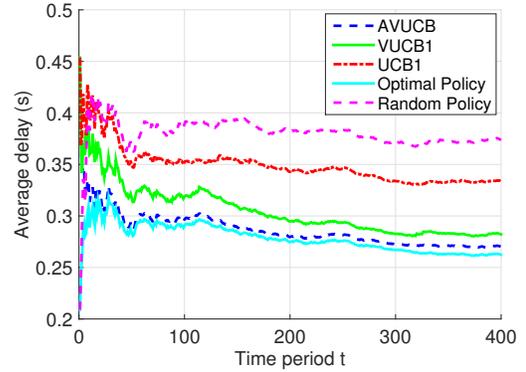}	
	\caption{Average delay performance of AVUCB algorithm under a realistic highway scenario.}
	\label{Performance_SUMO}
	\vspace{-5mm}
\end{figure}

Fig. \ref{Performance_SUMO} shows the average delay of the aforementioned 5 algorithms. Our proposed AVUCB algorithm achieves close-to-optimal delay performance, while outperforms the other algorithms. This is because by introducing the occurrence time of each SeV and the normalized input data size, AVUCB is both occurrence-aware and load-aware, and can effectively balance exploration and exploitation in the learning process.

	\section{Conclusions} \label{con}
	In this work, we have studied the task offloading problem in the VCC system and proposed a learning-based AVUCB algorithm that minimizes the average offloading delay based only on the historical delay observations. The proposed algorithm is of low complexity and easy to implement with low signaling overhead. 
	We have extended the classic MAB algorithms to be both load-aware and occurrence-aware, by taking into account the input data size of tasks and the occurrence time of SeVs in the utility function. 
	Therefore, AVUCB can effectively adapt to the highly dynamic vehicular environment and balance the tradeoff between exploration and exploitation in the learning process with performance guarantees. Simulations under both synthetic scenario and a realistic highway scenario have shown that our proposed algorithm can achieve close-to-optimal delay performance. Future research directions include considering the heterogeneous cloud architecture with VCC, MEC and remote cloud, as well as the cooperation of SeVs.

\section*{Acknowledgment}
This work is sponsored in part by the Nature Science Foundation of China (No. 91638204, No. 61571265, No. 61621091), NSF through grants CNS-1547461, CNS-1718901, CCF-1423542, and Intel Collaborative Research Institute for Mobile Networking and Computing.

\appendices{}


\section{Proof of Lemma 1} \label{a1}

The learning regret in the $b$th epoch can be written as
\begin{align}  \label{a1_1}
R_{b}&=x_0  \mathbb{E}\left[\sum_{t=t_b}^{t'_b}u(t,n)-\mu_b^*\right] \nonumber\\
&=x_0 \mathbb{E}\left[\sum_{n\in \mathcal{N}_b} k_{n,b}u_m\delta_{n,b} \right]=x_0u_m\sum_{n\neq a_b^*} \delta_{n,b} \mathbb{E}[k_{n,b}],
\end{align}
where $k_{n,b}$ is the number of tasks offloaded to SeV $n$ in epoch $b$.
Following the proof of Lemma 1 in \cite{bnaya2013social} and Theorem 1 in \cite{auer2002finite}, when $\beta=2u_m^2$, the expectation of $k_{n,b}$ can be upper bounded by
\begin{align} \label{a1_2}
\mathbb{E}[k_{n,b}] &\leq \frac{8\ln (t'_b-t_n)}{\delta^2_{n,b} }+1+\frac{\pi^2}{3}.
\end{align}
Substituting \eqref{a1_2} into \eqref{a1_1}, we prove Lemma 1:
\begin{align}
&R_{b} =x_0u_m\sum_{n\neq a_b^*} \delta_{n,b} \mathbb{E}[k_{n,b}] \nonumber\\
&\leq x_0u_m \left[\sum_{n\neq a_b^*}\frac{8\ln (t'_b-t_n)}{\delta_{n,b}} + \left(1+ \frac{\pi^2}{3}\right)\sum_{n\neq a_b^*}\delta_{n,b} \right].
\end{align}

\section{Proof of Theorem 1} \label{a3}
According to Lemma 1, the learning regret of the $b$th epoch: 
\begin{align}
R_{b}& \leq x_0u_m \left[\sum_{n\neq a_b^*}\frac{8\ln (t'_b-t_n)}{\delta_{n,b}} + \left(1+ \frac{\pi^2}{3}\right)\sum_{n\neq a_b^*}\delta_{n,b} \right]\nonumber\\
& \leq  x_0u_m \left[\sum_{n\neq a_b^*}\frac{8\ln T}{\delta_{n,b}} +O(1)\right].
\end{align}

By summing over $b=1,2,...,B$, the total learning regret:
\begin{align}
R_{T} =\sum_{b=1}^{B}R_{b}\leq  x_0u_m \sum_{b=1}^{B}\left[\sum_{n\neq a_b^*}\frac{8\ln T}{\delta_{n,b}} +O(1)\right].
\end{align}


\section{Proof of Theorem 2} \label{a2}
When $\beta=2u^2_m$ and $B=1$, the utility function in \eqref{ada_utility} is
\begin{align} 
\hat{u}_{t,n}=\bar{u}_{t-1,n}-u_m\sqrt{\frac{2(1-\tilde{x}_t)\ln t}{k_{t-1,n}}}.
\end{align}	

The offloading decision in \eqref{ada_obj} can be written as:
\begin{align} \label{a2_1}
a_t&=\arg\min_{n \in \mathcal{N}_1} \hat{u}_{t,n} \nonumber\\
&=\arg\min_{n \in \mathcal{N}_1} \left\{  \bar{u}_{t-1,n}-u_m\sqrt{\frac{2(1-\tilde{x}_t)\ln t}{k_{t-1,n}}} \right\}\nonumber\\
&=\arg\min_{n \in \mathcal{N}_1}\left\{ \frac{\bar{u}_{t-1,n}}{u_m}-\sqrt{\frac{2(1-\tilde{x}_t)\ln t}{k_{t-1,n}}} \right\} \nonumber\\
&=\arg\max_{n \in \mathcal{N}_1}\left\{1- \frac{\bar{u}_{t-1,n}}{u_m}+\sqrt{\frac{2(1-\tilde{x}_t)\ln t}{k_{t-1,n}}} \right\}.
\end{align}  

The learning regret can be written as
\begin{align} \label{a2_2}
&~R_T =\mathbb{E}\left[\sum_{t=1}^{T}x_t(u(t,n)-\mu^*)\right]    \nonumber\\
&=u_m\mathbb{E}\left[\sum_{t=1}^{T}x_t\left\{\left(1-\frac{\mu^*}{u_m}\right)-\left(1-\frac{u(t,n)}{u_m}\right)\right\}\right].
\end{align}

Since $1- \frac{\bar{u}_{t-1,n}}{u_m} \in [0,1]$, and $1-\frac{u(t,n)}{u_m} \in [0,1]$, our problem is equivalent to the problem defined in Section \ref{sys} in our previous work \cite{wu2017adaptive}.
By leveraging Lemma 7, Theorem 3 and Appendix C.2 in \cite{wu2017adaptive}, we can get Theorem 2.

	%

\end{document}